\newcommand{\ZZ}{\mathbb{Z}}
\newcommand{\QQ}{\mathbb{Q}}
\newcommand{\NN}{\mathbb{N}}
\newcommand{\CC}{\mathbb{C}}
\newcommand{\RR}{\mathbb{R}}
\theoremstyle{plain}
\newtheorem{lem}{Lemma}
\theoremstyle{definition}
\theoremstyle{remark}
\newtheorem{rem}{Remark}
\begin{document}
\title{How to compute the constant term of a power of a Laurent polynomial efficiently}
\author{Pavel Metelitsyn}
\begin{abstract}
We present an algorithm for efficient computation of the constant term of a power of a multivariate Laurent polynomial. The algorithm is based on univariate interpolation, does not require the storage of intermediate data and can be easily parallelized.
As an application we compute the power series expansion of the principal period of some toric Calabi-Yau varieties and find previously unknown differential operators of Calabi-Yau type.\end{abstract}
\maketitle

%%%%%%%%%%%%%%%%%%
% Introduction
%%%%%%%%%%%%%%%%%%
\section{Introduction}
The constant terms of powers of Laurent polynomials appear in different areas of mathematics and theoretical physics \cite{D}, \cite{DvK}. They show up prominently in the expansion of the {\em fundamental period} 
\begin{equation}\nonumber
\Phi(z)=\frac{1}{(2\pi i)^N}\int_{\Gamma} \frac{1}{1-zf(X)}\frac{dX_1}{X_1}\frac{dX_2}{X_2}\ldots \frac{dX_N}{X_N}\label{period-integral},
\end{equation} 
of a Laurent polynomial in $X_1,\dots,X_N$, and $\Gamma$ is the $N$-dimensional
torus cycle $|X_i|=\epsilon$. In the theory of mirror symmetry for
Calabi-Yau hypersurfaces in a toric variety, this fundamental period is of
great importance, \cite{BK}, \cite{BvS}.
To compute the Taylor expansion of $\Phi$ in the neighborhood of $z=0$ we write 
\begin{equation}\nonumber
\frac{1}{1-zf(X)}=\sum_{i\geq 0} (f(X))^iz^i
\end{equation} and integrate using Cauchy's formula. The $n$-th coefficient in the power series expansion of $\Phi$ is therefore the constant term of the $n$-th power of the Laurent polynomial $f$. 
For example the constant term of the 151st power of 
%%%% polytopes/23/4 %%%%
\begin{eqnarray}\label{big-poly-ex}\nonumber
f &=& \frac{X}{YZ}+\frac{ZT}{XY}+ZT+T+\frac{Y}{X}+\frac{Z}{X}+\frac{Y}{XT}+Y+\frac{XY}{ZT}+\frac{X}{Z}+\frac{Y}{T}\\ 
  &+& Z+\frac{1}{T}+X+\frac{X}{Y}+\frac{ZT}{Y}+\frac{X}{ZT}+\frac{Y}{ZT}+\frac{1}{X}+\frac{1}{Y}+\frac{ZT}{X}+\frac{T}{Y}+\frac{1}{Z}\nonumber
\end{eqnarray}
is\\
$15412036066982883611159466717890839926274227993361685769096965357956125\\
0836097113850549748895583119569242295079022614473032754474202469738117581\\
03097074502829198076370950235391810731785760778732696320,$ a number with 200 digits.

In this paper we consider the following general problem:\\ 
{\em How to compute
for a multivariate (Laurent) polynomial $f$ the coefficient of the monomial $X^\alpha, \, \alpha=(\alpha_1,\dots,\alpha_n)$ of the $p$-th power, $p\in\mathbb{N}$ of $f$.}\\

At first glance this problem is trivial. Why cant't we just split the problem in smaller ones and compute $f^{r}=:g=\sum b_\beta X^\beta$ and $f^{s}=:h=\sum c_\beta X^\beta$ s.t. $r+s=p$ and get the coefficient we are interested in by Cauchys product formula 
\begin{eqnarray}
\text{coeff. of } X^\alpha \text{ in } f^p=\sum_{i+j=\alpha} b_ic_j .\label{cauchy-naive}
\end{eqnarray}
A general polynomial in $n$ variables in which the degree of each individual variable is not greater than $d$ has $(d+1)^n$ monomials. In our example $n=4, \, p=151, \, d=2$ after multiplication with the common denominator $XYZT$. In order to use (\ref{cauchy-naive}) we could compute $f^{75}$ and $f^{76}$ which will require to store about $2(2\cdot 76+1)^4\approx10^9$ coefficients of intermediate polynomials $g$ and $h$. While this number is not really big, we must keep in mind that, in general, the absolute value of the coefficients grow exponentially as $p$ increases. Therefore, the amount of memory needed to store one particular coefficient will, typically, linearly grow with $p$. Another possibility is to use a multivariate discrete Fourier transform: evaluate the polynomial $f$ on sufficently many points of the $n$-dimensional grid of roots of unity of sufficient high degree, compute the $p$-th power of the values and apply the inverse discrete Fourier transform to get the single coefficient we are interested in. Here, again, the number of interpolation nodes and therefore of the intermediate values which have to be stored is of the same order as the number of monomials in $f^p$ i.e. $(2\cdot 151)^4$. 

In this paper, we present an efficient and, we believe, simple algorithm which adresses this problem. Our approach is similar to the DFT method in that it based on evaluation and iterpolation. However, the amount of memory needed is much lower and the running speed can be improved by a trivial parallelization of the evaluation. We also describe the implementation of the parallel version of the algorithm for the CUDA-enabled graphics hardware. Finally, an application to the problem of finding the Picard-Fuchs differential equation of the family of toric Calabi-Yau varieties is given. This problem was the main motivation for this work.\\
\vskip 10pt
{\bf Acknowledgement:} I like to thank D. van Straten for suggesting the
problem and E. Sch\"omer for useful discussions. This work was realised with 
support from the DFG priority program
{\em Algorithmic and Experimental Methods in Algebra, Geometry and Number 
Theory}, as part of the project {\em Monodromy algorithms in {\sc Singular}}.

%
%
%%%%%%%%%%%%%%%%%%%%%%%%%%%%%
% SECTTION: PRELIMINARIES
%%%%%%%%%%%%%%%%%%%%%%%%%%%%%
%
%
\section{Algorithm}
\subsection{Notations and Preliminaries}
 Let $f$ be a multivariate polynomial in variables $X_1,\dots,X_k$ with rational coefficients. Using multiindex notation we write
\[f=\sum_{i\in\Delta}c_iX^i,\]
$i:=(i_1,\dots,i_k)$ and $\Delta\subset\ZZ^k$ a finite index set.
%where $I\subset \mathbb{Z}^k$ is a finite index set.
Let $\deg_r(f)$ denote the degree of $f$ in the variable $X_r$. 
For a given $k$-dimensional multiindex $j$ we denote by $\left[f\right]_j$ the coefficient of $f$ in front of $X^j$ i.e.
\[\left[f\right]_j:=c_j.\]

 Furthermore let $j'$ denote the $(k-1)$-dimensional multindex obtained by truncating $j$ after the $(k-1)$-th component and for an $r$-dimensional multiindex $i$ let $ij$ denote the concatenation of both i.e. for $j=(j_1,\dots,j_k)$ and $i=(i_1,\dots,i_r)$
\begin{eqnarray}
j':&=&(j_1,\dots,j_{k-1}),\nonumber\\
ij:&=&(i_1,\dots,i_r,j_1,\dots,j_k).\nonumber
\end{eqnarray}

Given $N+1$ pairs $(u_{\alpha},w_{\alpha})\in\mathbb{Q}^{2},\, \alpha=0,\dots,N$
such that $u_{\alpha}\neq u_{\beta}$ for $\alpha\neq \beta$, there is a unique \emph{univariate interpolation polynomial} of degree $N$ such that
\[\mathcal{I}_N(u_{\alpha})=w_{\alpha},\,\forall\alpha\in\left\{0,\dots,N\right\}\]
\[\mathcal{I}_{N}(u_0,\dots,u_N;w_0,\dots,w_N)(Y)=\sum_{i=0}^N w_i l_i(Y),\]
where 
\begin{eqnarray*}
l_i(Y) &=& \prod_{i\neq j}\frac{Y-u_i}{u_j-u_i}
\end{eqnarray*}
is the Lagrange basis polynomial for $u_i$. $\mathcal{I}_N$ has rational coeffiients and $\mathcal{I}_N(u_{\alpha})=w_{\alpha},\,\forall\alpha=0,\dots,N$. 

Let $f_u$ denote the polynomial $f$ with $u$ substituted for the last variable i.e.

\[f_u(X_1,\dots,X_{n-1})=f(X_1,\dots,X_{n-1},u).\]
We have the following observation about the coefficients of the interpolation polynomial and those of $f_u$.
\begin{lem}\label{lem1} Let $f\in \QQ[X_1,\dots,X_n]$, $N:=\deg_{n}(f)$, $\left\{(u_\alpha,w_\alpha)\right\}_{\alpha=0,\dots,N}$ as above, and $i=(i_{1}, \dots,i_{n})\in\mathbb{N}^n$. Then, we have 
\begin{equation}
[f]_{i}=\left[\mathcal{I}^{N}(u_{0},\dots,u_{N};[f_{u_0}]_{i'},\dots,[f_{u_N}]_{i'})\right]_{i_{n}}\label{eq:1}.
\end{equation}
\end{lem}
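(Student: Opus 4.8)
The plan is to reduce the multivariate identity to the uniqueness of univariate Lagrange interpolation. First I would sort the monomials of $f$ according to the exponent of the last variable and write
\[
f=\sum_{m=0}^{N} g_m(X_1,\dots,X_{n-1})\,X_n^{m},\qquad g_m\in\QQ[X_1,\dots,X_{n-1}],
\]
which is legitimate precisely because $N=\deg_n(f)$. By construction $[g_m]_{i'}=[f]_{(i',m)}$ for every $(n-1)$-dimensional multiindex $i'$.

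Next, for a fixed $i'$ I would view $[f_u]_{i'}$ as a function of $u$. Since $f_u=\sum_{m=0}^{N} g_m(X_1,\dots,X_{n-1})\,u^{m}$, extracting the coefficient of $X^{i'}$ gives
\[
[f_u]_{i'}=\sum_{m=0}^{N}[f]_{(i',m)}\,u^{m}=:P(u),
\]
so $u\mapsto[f_u]_{i'}$ is a polynomial $P$ in $u$ of degree at most $N$ whose coefficient in front of $u^{m}$ is exactly $[f]_{(i',m)}$.

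Now I would invoke uniqueness. The pairs handed to the interpolation operator are $(u_\alpha,w_\alpha)$ with $w_\alpha=[f_{u_\alpha}]_{i'}=P(u_\alpha)$, and the nodes $u_\alpha$ are pairwise distinct, so $P$ is a polynomial of degree $\le N$ passing through all $N+1$ interpolation points. By the uniqueness of the interpolation polynomial of degree $N$, this forces
\[
\mathcal{I}^{N}(u_0,\dots,u_N;[f_{u_0}]_{i'},\dots,[f_{u_N}]_{i'})(Y)=P(Y).
\]
Comparing the coefficients of $Y^{i_n}$ on both sides yields $\bigl[\mathcal{I}^{N}(\dots)\bigr]_{i_n}=[P]_{i_n}=[f]_{(i',i_n)}=[f]_i$, which is (\ref{eq:1}). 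If $i_n>N$ both sides vanish, so no separate case is needed.

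There is no genuinely hard step here: the content of the lemma is exactly the observation that substituting $u$ for $X_n$ commutes with taking the $X^{i'}$-coefficient, which turns the problem into univariate interpolation of the polynomial $P$. The only points that require a little care are the bookkeeping between the truncation $i'$ and the concatenation $(i',m)$, and the fact that it is ``degree at most $N$'', not ``degree exactly $N$'', that makes the uniqueness argument go through.
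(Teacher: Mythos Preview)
Your proof is correct and follows essentially the same route as the paper: both recognize that $u\mapsto [f_u]_{i'}$ is a univariate polynomial of degree at most $N$ whose coefficients are exactly the $[f]_{(i',m)}$, and then conclude that interpolation at $N+1$ distinct nodes recovers those coefficients. The only difference is cosmetic: the paper writes this out as an explicit Vandermonde matrix identity (multiplying by $V^{-1}$ and then noting the vector of values is $V$ times the vector of coefficients), whereas you invoke the uniqueness of the degree-$N$ interpolant directly, which is the cleaner formulation of the same fact.
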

\begin{proof} Let 
\begin{equation}
f(X_{1},\dots,X_{n})=\sum_{\nu_{1},\nu_{2},\dots,\nu_{n}}c_{\nu_{1}\nu_{2}\dots\nu_{n}}X_{1}^{\nu_{1}}X_{2}^{\nu_{2}}\dots X_{n}^{\nu_{n}},\label{polynom}\end{equation}
and \begin{equation}
\mathcal{I}^{N}(u_{0},\dots,u_{N};[f_{u_0}]_{i'},\dots,[f_{u_N}]_{i'})(Y)=\sum_{k=0}^{N}a_{k}Y^{k}.\label{eq:interpol}\end{equation}
Then, the right hand side of (\ref{eq:1}) equals $a_{i_{n}}$ and, therefore, we have to show that $[f]_{i}=a_{i_{n}}$.
The coefficients $a_{0},\dots,a_{n}$ of the interpolation polynomial are found by multiplying the vector $([f_{u_0}]_{i'},[f_{u_1}]_{i'},\dots,[f_{u_N}]_{i'})^{T}$
by the inverse of the Vandermonde matrix for $u_{0},\dots,u_{N}$. This inverse exists since $u_{0},\dots,u_{N}$ are pairwise distinct.
Therefore, we have \begin{equation}
\left(\begin{array}{c}
a_{0}\\
a_{1}\\
\vdots\\
a_{N}\end{array}\right)=\left(\begin{array}{ccccc}
1 & u_{0} & u_{0}^{2} & \cdots & u_{0}^{N}\\
1 & u_{1} & u_{1}^{2} & \cdots & u_{1}^{N}\\
\vdots & \vdots & \vdots & \ddots & \vdots\\
1 & u_{N} & u_{N}^{2} & \cdots & u_{N}^{N}\end{array}\right)^{-1}\left(\begin{array}{c}
\left[f_{u_0}\right]_{i'}\\
\left[f_{u_1}\right]_{i'}\\
\vdots\\
\left[f_{u_N}\right]_{i'}\end{array}\right)\label{zurueck}\end{equation}
By (\ref{polynom}) we have
\begin{eqnarray*}
[f_{u_j}]_{i'} & = & \left[\sum_{\nu_{1},\nu_{2},\dots,\nu_{n-1},\tau}c_{\nu_{1}\nu_{2}\dots\tau}u_{j}^{\tau}X_{1}^{\nu_{1}}X_{2}^{\nu_{2}}\dots X_{n-1}^{\nu_{n-1}}\right]_{i'}\\
 & = & \sum_{\tau}c_{i_{1}i_{2}\dots i_{n-1}\tau}u_{j}^{\tau}\\
 & = & \sum_{\tau}c_{i'\tau}u_{i}^{\tau},\end{eqnarray*}
where in the last expression $i'\tau$ is an abbriviration for the
multiindex $(i_{1},\dots,i_{n-1},\tau$). Using this, rewrite (\ref{zurueck}) as
\begin{equation}
\left(\begin{array}{c}
a_{0}\\
a_{1}\\
\vdots\\
a_{N}\end{array}\right)=\left(\begin{array}{ccccc}
1 & u_{0} & u_{0}^{2} & \cdots & u_{0}^{N}\\
1 & u_{1} & u_{1}^{2} & \cdots & u_{1}^{N}\\
\vdots & \vdots & \vdots & \ddots & \vdots\\
1 & u_{N} & u_{N}^{2} & \cdots & u_{N}^{N}\end{array}\right)^{-1}\left(\begin{array}{c}
\sum_{\tau}c_{i'\tau}u_{0}^{\tau}\\
\sum_{\tau}c_{i'\tau}u_{1}^{\tau}\\
\vdots\\
\sum_{\tau}c_{i'\tau}u_{N}^{\tau}\end{array}\right)
\label{zurueck-again}
\end{equation}
At the same time 
\begin{equation}
\left(\begin{array}{c}
\sum_{\tau}c_{i'\tau}u_{0}^{\tau}\\
\sum_{\tau}c_{i'\tau}u_{1}^{\tau}\\
\vdots\\
\sum_{\tau}c_{i'\tau}u_{N}^{\tau}\end{array}\right)=\left(\begin{array}{ccccc}
1 & u_{0} & u_{0}^{2} & \cdots & u_{0}^{N}\\
1 & u_{1} & u_{1}^{2} & \cdots & u_{1}^{N}\\
\vdots & \vdots & \vdots & \ddots & \vdots\\
1 & u_{N} & u_{N}^{2} & \cdots & u_{N}^{N}\end{array}\right)\left(\begin{array}{c}
c_{i'0}\\
c_{i'1}\\
\vdots\\
c_{i'N}\end{array}\right).\label{hin}\end{equation}
 From (\ref{hin}) and (\ref{zurueck-again}) we get \[
\left(\begin{array}{c}
a_{0}\\
a_{1}\\
\vdots\\
a_{N}\end{array}\right)=\left(\begin{array}{c}
c_{i'0}\\
c_{i'1}\\
\vdots\\
c_{i'N}\end{array}\right)\]
 and with (\ref{polynom})
 \[a_{i_{n}}=c_{i'i_{n}}=c_{i}=[f]_{i}\]
follows.
\end{proof}
\noindent
Since $\deg_{n}(f^p)=p\cdot\deg_{n}(f)$ we have
\begin{equation}\label{cor_1}
\left[f^p\right]_{i}=\left[\mathcal{I}^{N}(u_{0},\dots,u_{N};[f_{u_0}^p]_{i'},\dots,[f^p_{u_N}]_{i'})\right]_{i_{n}},
\end{equation}
where $N=p\cdot\deg_{n}(f)$.

\subsection{The algorithm and its application to the constant term series of a Laurent polynomial } 
Let $h\in\mathbb{Q}[X_1^\pm,\dots,X_n^\pm]$ be a Laurent polynomial. We want to calculate the first coefficients of its constant terms series $([h^p]_0)_{p\in\NN}$. Let $S=X_1^{s_1}\dots X_n^{s_k},\ s_i\geq 0,$ be the common denominator of the Laurent monomials of $h$. Then $Sh\in\QQ[X_1,\dots,X_k]$ and
 \begin{equation}\label{power_of_laurent}\nonumber
[h^p]_0=[S^pf^p]_{p\cdot s},
\end{equation}
where $p\cdot s:=(ps_1,\dots,ps_n)$. We want to apply (\ref{cor_1}) to $f=Sh$ and $i=p\cdot s$. 

The simple idea now is to compute $\left[f_{u_j}^p\right]_{i'},\, j=0,\dots,N$ which appear in (\ref{cor_1}) recursively i.e.
\begin{equation}
\left[f_{u_j}^p\right]_{i'}=\left[\mathcal{I}^{N}(u_{0},\dots,u_{N};[f_{u_j u_0}^p]_{i''},\dots,[f^p_{u_j u_N}]_{i''})\right]_{i_{n-1}}
\end{equation}
and so on for $f_{u_ju_i},f_{u_ju_iu_k},\dots$. 
With each level of recursion the number of variables decreases by one, therefore the depth of the recursion equals $n$.
For simplicity assume 
\begin{equation*}\label{asump}
\deg_{1}(f)=\dots=\deg_{n}(f)=d.
\end{equation*}
The computation of $\left[f^p\right]_i$ proceeds as follows:
\begin{enumerate}
\item For $N=dp$ fix $u_0,\dots,u_{N}$ as in Lemma and compute $V^{-1}=(V_{ij})$, where $V$ is the Vandermonde matrix of $(u_0,\dots,u_{N})$. \item Invoke the recursive procedure $\operatorname{COEFF}(i,p,k,A)$ with initial parameters $k=n$ and $A=$''coefficient matrix of $f$''.
\end{enumerate}
The pseudo code of the procedure $\operatorname{COEFF}$ is given in the Algorithm 1.
\renewcommand{\algorithmicrequire}{\textbf{Input:}}
\renewcommand{\algorithmicensure}{\textbf{Output:}}
\renewcommand{\algorithmiccomment}[1]{// #1}
\noindent
\begin{algorithm}
%$\operatorname{COEFF}(i,p,k,A)$
\caption{$\operatorname{COEFF}(k,i,A,p)$}
\label{alg-coeff}
\begin{algorithmic}[1]
\REQUIRE $k>0$, $i\in\mathbb{N}^k$, $A\in Mat(\underbrace{d\times\cdots\times d}_{k})$,  $p\geq 1$
\ENSURE $\left[f^p\right]_i$
%\begin{enumerate}
\FORALL{$0\leq s\leq N$}
\STATE $B_{i_1,\dots,i_{k-1}}\leftarrow\sum_j A_{i_1,\dots,i_{k-1},j}u_s^j$\COMMENT Do it using Horner's rule
\IF{$k=1$}
\STATE $w_s\leftarrow B^p$ \COMMENT{$B$ has $k-1=0$ indices i.e. $B$ is a scalar}  
\ELSE
\STATE $w_s\leftarrow \operatorname{COEFF}(k-1,i',B,p)$
\ENDIF
\ENDFOR
\STATE $S \leftarrow \sum_{s=0}^N w_sV_{i_{k}s}$ \COMMENT Compute the $i_k$-th component of $V^{-1}w$
\RETURN $S$
%\end{enumerate}
\end{algorithmic}
\end{algorithm}

To improve the performane we apply some well known computational tricks. The evaluation in the line 2 is done using the Horner's rule which for a polynomial $h(x)=a_0+a_1x+\dots+a_lx^l$ at $x=x_0$ says
\[f(x_0)=a_0+x_0(a_1+x_0(a_2+x_0(\dots))).\]
In this way the expensive computation of powers of $x_0$ is avoided.
The Vandermonde matrix $V$ can be inverted using classical Gauss elimination algorithm in $O(N^3)$ time. Since on each level of recursion we need only one row of $V^{-1}$ (line 9 in the code above) precomputing the whole inverse seems too expensive. Fortunately, if the nodes $u_0,\dots,u_N$ are choosen to be equidistant there is a recursive algorithm which computes one single row of $V^{-1}$ in $O(N^2)$ \cite{turner}. The computation of the $p$-th power in the line 4 can be done using binary powering algorithm as described in \cite{knuth}. We also note that $w_s$'s in line 6 do not depend on each other and thus the order in which they are computed does not matter. In particular they can be computed parallely. Finally, the number of nodes $N$ need not to be the same on each recursion level but depend on the degree of $f$ in a particular variable i.e. we have to choose $N_k:=\max\left\{\deg_{1}(f),\dots,\deg_{k}(f)\right\}\cdot p$ on the $(n-k)$-th recursion level. For example if $f$ has degree $2$ in one variable and degree $1$ in the remaining variables we could improve speed by considering $2p$ interpolation nodes only on one level of recursion and $p$ interpolation nodes on the ramaining.

% Depending on t
If $f$ happens to have special form 
\begin{equation}\label{decomp}
f=A+BX_1+CX^2_{1}
\end{equation} 
with $A,B,C\in\mathbb{Z}[X_2,\dots,X_k]$ a trick can be applied to slightly speed up the computation. If (\ref{decomp}) holds, then so is for $f_u$,
\begin{equation}\nonumber
f_u=A_u+B_uX_1+C_uX^2_{1}
\end{equation}
with $A_u,B_u,C_u\in\mathbb{Z}[X_2,\dots,X_{k-1}]$. Consequently, this decomposition holds on all levels of recursion. We note that in this case
\begin{equation}\nonumber
f^{p} = \sum_{i,j}{p \choose i,j}A^{i}C^{j}B^{p-i-j}X_1^{p-i+j},
\end{equation}
and
\begin{eqnarray}\label{split2}
\left[f^{p}\right]_p & = &\left[\sum_{i=0}^{p}{p \choose i,i}A^{i}C^{i}B^{p-2i}\right]_{p}\label{split}\\
 & = & \sum_{i=0}^{p}{p \choose i,i}\left[A^{i}C^{i}B^{p-2i}\right]_{p}\nonumber.
\end{eqnarray}
since we need only the terms with $X_1^{p-i+j}=X_1^p$ i.e. $i=j$. We apply this ''shortcut'' on the last but one level of recursion to compute $\left[f^p_{u_1,\dots,u_{k-2}}\right]_p$ when we have to deal with polynomials in two variables. We use the modified version of COEFF and a new procedure SPLIT2 for this. Thus the depth of recursion is reduced by one. For the case $n=4$ the call tree of the algorithm is depicted in Figure \ref{tree}. 

\noindent
\begin{algorithm}
%$\operatorname{COEFF}(i,p,k,A)$
\caption{$\operatorname{COEFF}(k,i,A,p)$}
\label{alg-coeff-mod}
\begin{algorithmic}[1]
\REQUIRE $k>0$, $i\in\mathbb{N}^k$, $A\in Mat(\underbrace{d\times\cdots\times d}_{k})$,  $p\geq 1$
\ENSURE $\left[f^p\right]_i$
%\begin{enumerate}
\FORALL{$0\leq s\leq N$}
\STATE $B_{i_1,\dots,i_{k-1}}\leftarrow\sum_j A_{i_1,\dots,i_{k-1},j}u_s^j$
\IF{$k=2$}
\STATE $w_s\leftarrow \operatorname{SPLIT2}(i',B,p)$
\ELSE
\STATE $w_s\leftarrow \operatorname{COEFF}(k-1,i',B,p)$
\ENDIF
\ENDFOR
\STATE $S \leftarrow \sum_{s=0}^N w_sV_{is}$
\RETURN $S$
%\end{enumerate}
\end{algorithmic}
\caption{Modified version of COEFF}
\end{algorithm}

\begin{algorithm}
%$\operatorname{COEFF}(i,p,k,A)$
\caption{$\operatorname{SPLIT2}(i,A,p)$}
\label{alg-split2}
\begin{algorithmic}[1]
\REQUIRE $k\geq 0$, $i\in\mathbb{N}^k$, $A\in Mat(\underbrace{d\times\cdots\times d}_{k})$,  $p\geq 1$
\ENSURE $\left[f^p\right]_i$
\STATE $s \leftarrow 0$
\STATE $m \leftarrow \frac{N}{2}+1$
\FORALL{$0\leq i\leq N$}
\STATE $w_{i}\leftarrow (A_{00}+u_i(A_{01}+u_iA_{02}))\cdot(A_{20}+u_i(A_{21}+u_iA_{22}))$ 
\STATE $t\leftarrow A_{10}+u_i(A_{11}+u_iA_{12})$
\IF{$n\equiv 0\mod 2$}
\STATE $v_{m-1,i}\leftarrow 1$
\ELSE
\STATE $v_{m-1,i}\leftarrow t$
\ENDIF
\STATE $t\leftarrow t^2$
\FORALL{$0\leq j\leq m-2$}
\STATE $v_{m-2-j,i}\leftarrow t\cdot v_{m-1-j,i}$
\ENDFOR
\STATE $w'_i\leftarrow 1$
\STATE $z_i\leftarrow v_{0,i}$
\ENDFOR
\STATE $t\leftarrow 0$
\FORALL{$0\leq i\leq N$}
\STATE $s\leftarrow s + c_iz_i$
\ENDFOR
\STATE $s\leftarrow M_0\cdot s$
\FORALL{$1\leq j\leq m$}
\STATE $t\leftarrow 0$
\FORALL{$0\leq i\leq N$}
\STATE $w'_i\leftarrow w'_i\cdot w_i$
\STATE $t\leftarrow t+w'_i\cdot w_i\cdot c_i\cdot v_{j,i}$
\ENDFOR
\STATE $s\leftarrow s+M_j\cdot t$
\ENDFOR
\RETURN $s$
\end{algorithmic}
\end{algorithm}

The bottom procedure SPLIT2 is called exactely $N^{n-2}$ times. Inside the SPLIT2 procedure we have two nested loops of depth 2. The bodies of both inner loops (lines 12-14 and 25-28 in the listing \ref{alg-split2}) are executed both at most $\frac{N^2}{2}$ times. Therefore the time complexity of the algorithm is of order $O(N^{n})$. Thanks to SPLIT2 we gain a factor $\frac{1}{2^{n-2}}$ compared to the version which does not use SPLIT2. If we take into consideration the linear growth of the length of the numbers involved into computation we arrive at assymtotical running time $pN^n$ or $d^np^{n+1}$, since $N$ depends linearly on $p$ and $d$.

As we can see from the description of the algorithm, the memory consumption is $O(N)=O(dp)$, which is the space needed to store the $n-1$ rows of the inverse Vandermonde matrix. In particular, for fixed $d$ it is independent from the number of monomials $f$ has.

\begin{figure}\label{tree}
\begin{center}
\includegraphics[width=10cm]{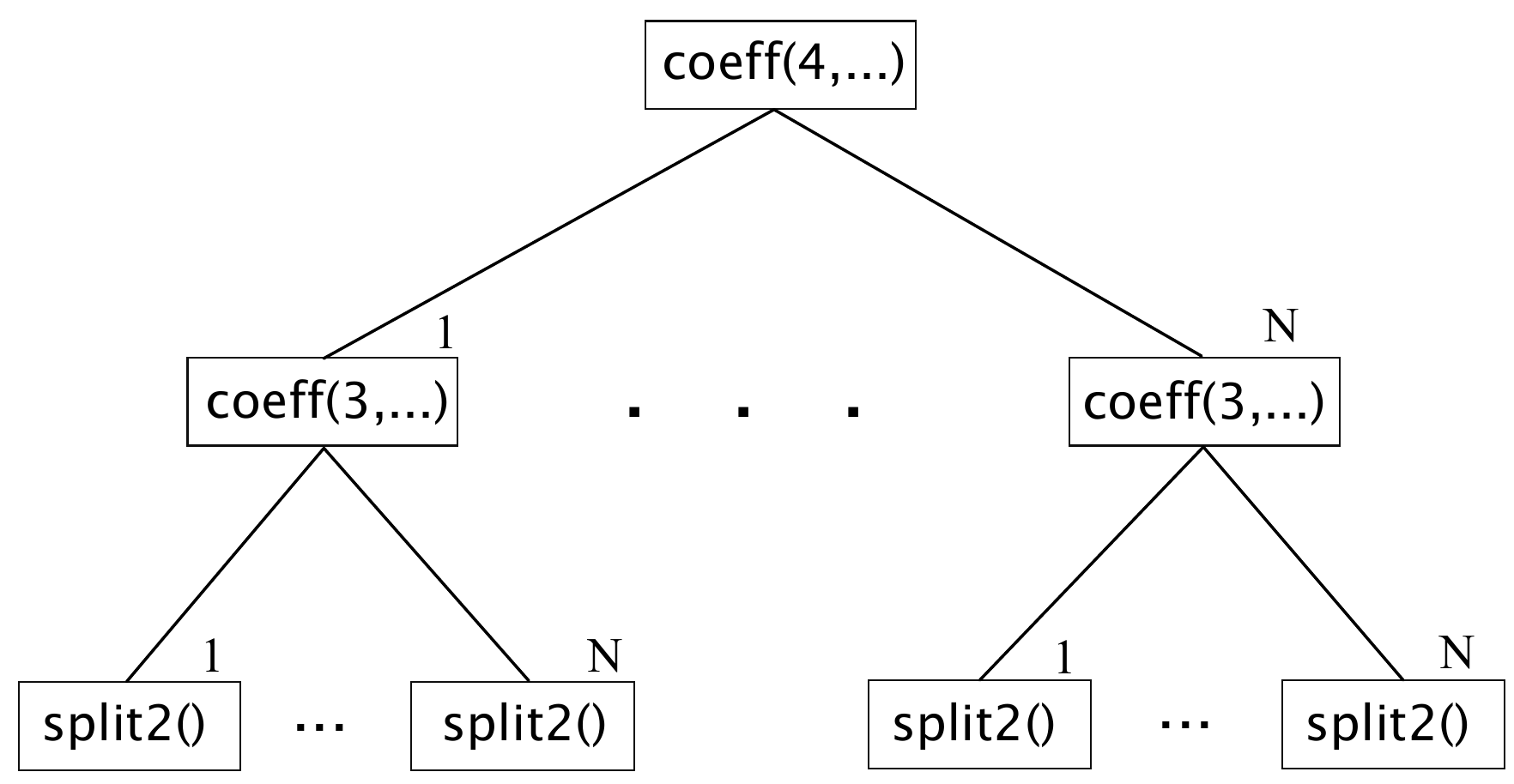}
\caption{Call tree of the algorithm for $k=4$.}
\end{center}
\end{figure}

\begin{rem} When $d$ is fixed, the running time depends on $p$ and the number of monomials in $f$.
\end{rem}  

\subsection{Implemetation details}
As was mentioned before, the algorithm can be parallelised in a very straight-forward way. As a target platform we choose a graphics card built around NVIDIA GeForce GTX 480 GPU\footnote{Graphics Processing Unit}. This GPU is capable of running up to 480 parallel execution threads. The GPU is programmed using an extension of C/C++ programming language called \emph{NVIDA CUDA\footnote{Compute Unified Device Architecture} C} \cite{cuda}. A typical CUDA program is executed on both the CPU and the GPU. The parts of the code that run on the graphics hardware are called \emph{kernels} and each kernel can be run $N$ times in parallel by $N$ different CUDA threads. The threads are organized in one-, two- or three-dimensional \emph{thread blocks} while blocks are organized in one- or two-dimensional \emph{grid}. The graphics hardware was origianlly created for the purpose of low-precision floating point operations it uveils its full computing power only when dealing with 32-bit floating point numbers. Therefore we can perform exact computations only on numbers within the range $-2^{23}\dots 2^{23}$ or equivalently $-8388608\dots 8388608$. That is because only 23 bits of 32 are used to store the significant digits (8-bits being reserved for the exponent and one remaining bit for the sign). This range is far too small to be useful due to the rapid growth of the numbers $\left[f^p\right]_0$.
Therefore we used a \emph{residue number system} (RNS) to represent large integers. Given a (fixed) set of pairwise coprime natural numbers $m_1,\dots,m_s$, the integer $x<M:=\prod m_i$ is represented by the system of residues
\begin{eqnarray*}
x_1:&=& x \mod m_1\\
&\vdots&\\
x_s:&=& x \mod m_s,
\end{eqnarray*}
$M$ is called the \emph{dynamic range} of the system. By the well-known \emph{Chinese Reminder Theorem} the number $x$ can be reconstructed from $(x_1,\dots,x_s;m_1,\dots,m_s)$. In practice, the conversion from an RNS representation of $x$ to a decimal representation is done with the \emph{Mixed Radix Conversion (MRC)} i.e. $x$ is represented in the form
\[x=a_1+a_2m_1+a_3m_1m_2+\dots+a_sm_1m_2\dots m_{s-1},\]
where $a_i$'s are called \emph{mixed radix digits}. Once $a_i$'s are known the decimal value can be computed easily. One classical algorithm for finding $a_i$'s from RNS representation of $x$ is given in \cite{Szabo}. 

In our implementation the program branches in several execution threads in the procedure COEFF on the top level of recursion. The threads are organized in $b$ blocks with $t$ threads per block. All the threads within the $k$-th block performs computation modulo prime number $p_k$. The $i$-th thread of $k$-th block computes $w_j\mod p_k$ for $j = i \mod t$. The arrangement of threads is illustrated in Figure 2.

\begin{figure}\label{threads}
\begin{center}
\includegraphics[width=6cm]{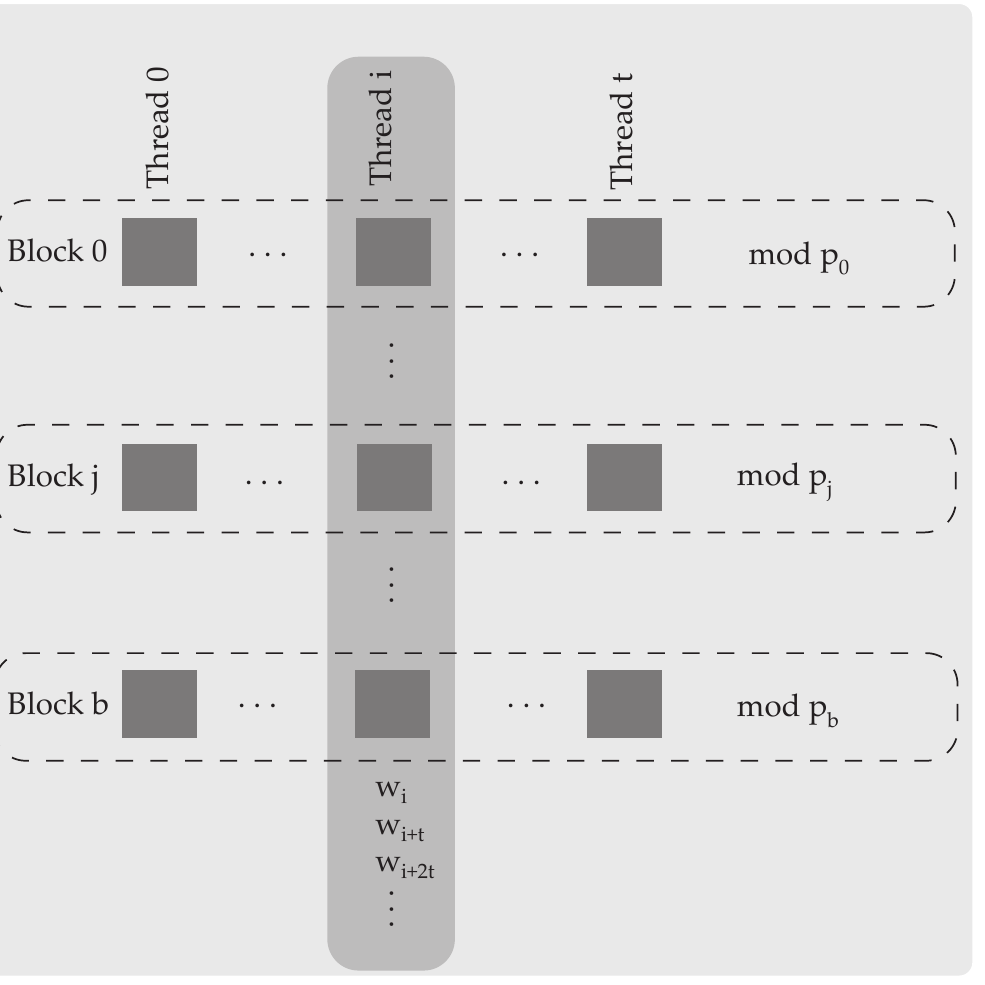}
\caption{Arrangement of threads.}
\end{center}
\end{figure}

\section{Application to toric Calabi-Yau varieties}
In the theory of toric Calabi-Yau varieties the constant terms of powers of Laurent polynomials occur in the following way. Consider a reflexive lattice polytope $\Delta\subset\RR^4$. Every vertex $v_i=(v_{i,1},\dots,v_{i,4})\in\ZZ^4$ of $\Delta$ corresponds to a monomial $\prod_{j=1}^4 X_j^{v_{i,j}}$. Then the Laurent polynomial corresponding to $\Delta$ is
\[f_\Delta=\sum_{i=1}^k \prod_{j=1}^4 X_j^{v_{i,j}}.\]
This polynomial defines a family of hypersurfaces $X_z:=\left\{1-z\cdot f_\Delta=0\right\}\subset(\CC^\times)^4$.
The function 
\begin{equation}
\Phi(z)=\sum_{i=0}^\infty \left[f_\Delta^i\right]_0 z^i
\end{equation} is the period of the holomorphic 3-form
\[\omega_z=\operatorname{Res}_{X_z}\left(\frac{1}{1-zf_\Delta(X)}\frac{dX_1}{X_1}\wedge\frac{dX_2}{X_2}\wedge\frac{dX_3}{X_3}\wedge\frac{dX_4}{X_4}\right)\]
on $X_z$ and is a solution of the Picard-Fuchs equation 
\begin{eqnarray}
L\Phi(z)&=&0\nonumber
\end{eqnarray}
where 
\begin{equation}
L=P_0(\theta)+zP_1(\theta)+\dots+z^kP_k(\theta),\label{dif-op}
\end{equation}
 $\theta:=z\frac{d}{dz}$ and $P_i$ are polynomials with integral coefficients
\begin{equation}\nonumber
P_i = \sum_{j=0}^{d_i} b_{ij}X^j.
\end{equation}
The numbers $(a_i)_{i\in\ZZ}$
\begin{equation}
a_i:=\begin{cases}\left[f^i\right]_0, & i\geq 0\\
      0, & \text{otherwise}
     \end{cases}
\end{equation}
 then satisfy the following recurrence relation
\begin{equation}
P_{0}(n)a_{n}+P_{1}(n-1)a_{n-1}+\dots+P_{k}(n-k)a_{n-k}=0 \label{eq-rec},\end{equation} for all $n\in\NN_0$. We say that the recurrence has length $k+1$ and degree $d:=\max\left\{\deg P_i\right\}$. Conversely, if a sequence satisfies (15) then its generating function is anihilated by the differential operator (13). 

Given $\Delta$ we want to find the corresponding differential operator $L$. To do this, we compute first $N$ coefficients of the constant terms series (12) of $f_\Delta$ and try to find a reccurence of the form (15). Assume there is one with lenght $k+1$ and degree $d$. Then it is determined by $(k+1)(d+1)$ coefficients $b_{ij}, \; i=0,\dots,k,\; j=0,\dots,d$ of $P_0,\dots,P_k$. We check if there is a solution to system of $N+1$ linear equations for the $(k+1)(d+1)$ unknowns $b_{ij}$
\begin{equation}\label{linsys}
\left\{
\begin{array}{l c c}
P_{0}(N)a_{N}+P_{1}(N-1)a_{N-1}+\dots+P_{k}(N-k)a_{N-k}&=&0\\
%P_{0}(n-1)a_{n-1}+P_{1}(n-2)a_{n-2}+\dots+P_{k}(n-k-1)a_{n-k-1}&=&0\\
\vdots&\vdots&\vdots\\
%P_{0}(k)a_{k}+P_{1}(k-1)a_{k-1}+\dots+P_{k}(0)a_{0}&=&0\\
%P_{0}(k-1)a_{k-1}+P_{1}(k-2)a_{k-2}+\dots+P_{k-1}(0)a_{0}&=&0\\
%P_{0}(k-2)a_{k-2}+P_{1}(k-3)a_{k-3}+\dots+P_{k-2}(0)a_{0}&=&0\\
%&\vdots&\\
P_{0}(2)a_{2}+P_{1}(1)a_{1}+P_{2}(0)a_{0}&=&0\\
P_{0}(1)a_{1}+P_{1}(0)a_{0}&=&0\\
P_{0}(0)a_{0}&=&0
\end{array}
\right.
\end{equation}
We have to keep this system overdetermined i.e. to have $N+1>(k+1)(d+1)$. If there is a solution which does not change when we add more equations, we hope that we found the correct operator. Since, a priori, we do not know the length and the degree of the reccurence we make the Ansatz (16) for all $k,n$ which are allowed by the requirement that (16) be overdetermined. Thus it is essential to be able to compute $[f^i]_0$ for as many $i$'s as possible in reasonable time.
 
This method was used by Batyrev and van Straten \cite{BvS} and has been polular ever since. More recently, Batyrev and Kreuzer used this approach in \cite{BK} to determine the  Picard-Fuchs operators for 
some new families of Calabi-Yau threefolds with Picard number 1. They succeeded in 28 cases out of 68. Using our algorithm were able to compute 4 more Calabi-Yau operators. Although we knew up to 300 constant terms, in many cases the recurrence did not show up. The operators and their corresponding polynomials are:\\

%% #24 %%%%%%%%%%%%%%%%%%%%%%%%%%%%%%%%%%%%%%%%%%%%%%%%%%%%%%%%%%%%%%%%%%%%
\noindent
\textbf{\#24}:

\noindent
\small

$\begin{array}{@{\hspace{0mm}}r@{\;}l@{\hspace{0mm}}}
f_{24}&=\frac{1}{T}+Y+{\frac {T}{X}}+{\frac {ZT}{X}}+{\frac {ZT}{XY}}+\frac{1}{Z}+{
\frac {X}{Z}}+{\frac {Y}{ZT}}+{\frac {X}{ZT}}+{\frac {XY}{ZT}}+{\frac 
{Y}{T}}+{\frac {T}{Y}}+{\frac {T}{XY}}+{\frac {Y}{X}}+\frac{1}{X}+{\frac 
{ZT}{Y}}+T\\
&+\frac{1}{Y}+X+{\frac {X}{T}}+{\frac {1}{ZT}}+Z+{\frac {Z}{Y}}
\end{array}$\\

$\begin{array}{@{\hspace{0mm}}r@{\;}l@{\hspace{0mm}}}
D_{24}&=97^2\,{\theta}^{4}+97 z\theta\left( -291-1300\,\theta-2018\,{\theta}^{2}+1727\,{\theta}^{3} \right)  \\
&+{z}^{2} \left(-2709792-10216234\,\theta-16174393\,{\theta}^{2}-13428812\,{\theta}^{3}-1652135\,{\theta}^{4}\right) \\
&+{z}^{3} \left(-138000348-443115594\,\theta-568639497\,{\theta}^{2}-364126194\,{\theta}^{3}-81753435\,{\theta}^{4}\right) \\
&+{z}^{4} \left(-3049275024-8869415520\,\theta-10006378570\,{\theta}^{2}-5423394464\,{\theta}^{3}-1175502862\,{\theta}^{4}\right)\\ 
&+{z}^{5} \left( -38537290992-103964102350\,\theta-106108023451\,{\theta}^{2}-50507429234\,{\theta}^{3}-9726250397\,{\theta}^{4}\right) \\
&+{z}^{6} \left( -308040167808-781527778884\,\theta-733053660150\,{\theta}^{2}-312374434824\,{\theta}^{3}-52762935894\,{\theta}^{4}\right) \\
&+{z}^{7} \left(-1619360309088-3901093356168\,\theta-3399527062044\,{\theta}^{2}-1313199235080\,{\theta}^{3}-195453433908\,{\theta}^{4}\right)\\
&-144 {z}^{8} \left( \theta+1 \right)  \left( 3432647479\,{\theta}^{3}+22487363787\,{\theta}^{2}+50808614711\,\theta+38959393614 \right) \\
&-432 z^9\left( \theta+2 \right)  \left( \theta+1 \right)  \left( 1903493629\,{\theta}^{2}+10262864555\,\theta+14314039440 \right)\\ 
&-438048 z^{10} \left( 1862987\,\theta+5992902 \right)  \left( \theta+3 \right)  \left( \theta+2 \right)  \left( \theta+1 \right) \\ 
&-368028363456 z^{11} \left( \theta+1 \right)  \left( \theta+2 \right)  \left( \theta+3 \right)  \left( \theta+4 \right)
\end{array}$\\

%% #39 %%%%%%%%%%%%%%%%%%%%%%%%%%%%%%%%%%%%%%%%%%%%%%%%%%%%%%%%%%%%%%%%%
\noindent
\normalsize
\textbf{\#39}:

\noindent
\small

$\begin{array}{@{\hspace{0mm}}r@{\;}l@{\hspace{0mm}}}
f_{39}&={\frac {X}{YZ}}+{\frac {ZT}{XY}}+ZT+T+{\frac {Y}{X}}+{\frac {Z}{X}}+{\frac {Y}{XT}}+Y+{\frac {XY}{ZT}}+{\frac {X}{Z}}+{\frac {Y}
{T}}+Z+\frac{1}{T}+X+{\frac {X}{Y}}+{\frac {ZT}{Y}}+{\frac {X}{ZT}}\\
&+{\frac {Y}{ZT}}+\frac{1}{X}+\frac{1}{Y}+{\frac {ZT}{X}}+{\frac {T}{Y}
}+\frac{1}{Z}
\end{array}$\\

$\begin{array}{@{\hspace{0mm}}r@{\;}l@{\hspace{0mm}}}
D_{39}&=16\,{\theta}^{4}-4z\theta\, \left( 12+53\,\theta+82\,{\theta}^{2}+2\,{\theta}^{3} \right)\\
&+{z}^{2} (-5120-18308\,\theta-26199\,{\theta}^{2}-18410\,{\theta}^{3}-4895\,{\theta}^{4}) \\
&+{z}^{3} (-143808-430092\,\theta-497452\,{\theta}^{2}-272424\,{\theta}^{3}-60679\,{\theta}^{4})\\
&+{z}^{4}(-1478544-3987101\,\theta-4034628\,{\theta}^{2}-1870838\,{\theta}^{3}-344527\,{\theta}^{4})\\
&-{z}^{5}\left( \theta+1 \right)  \left( 1076509\,{\theta}^{3}+5847783\,{\theta}^{2}+11226106\,\theta+7492832 \right)\\
&-2 {z}^{6}\left( \theta+2 \right)  \left( \theta+1 \right)  \left( 944887\,{\theta}^{2}+4249317\,\theta+5045304 \right) \\
&-3328 {z}^{7}\left( 518\,\theta+1381 \right)  \left( \theta+3 \right)  \left( \theta+2 \right)  \left( \theta+1 \right)\\
&-621920 {z}^{8} \left( \theta+1 \right)  \left( \theta+2 \right)  \left( \theta+3 \right)  \left( \theta+4 \right)
\end{array}$\\

%% #41 %%%%%%%%%%%%%%%%%%%%%%%%%%%%%%%%%%%%%%%%%%%%%%%%%%%%%%%%%%%%%%%
\noindent
\normalsize
\textbf{\#41}:

\noindent
\small

$\begin{array}{@{\hspace{0mm}}r@{\;}l@{\hspace{0mm}}}
f_{41}&=XT+YZ+ZT+{\frac {1}{ZT}}+T+{\frac {1}{XZ}}+{\frac {1}{YZ}}+{\frac {1}{XY}}+{\frac {1}{XT}}+XZT+{\frac {1}{YT}}+YZT+{\frac {1}{XYZ}}+{\frac {1}{XYT}}\\
&+{\frac {1}{XZT}}+{\frac {1}{YZT}}+XYZT+{\frac {1}{XYZT}}+Y+X+Z+\frac{1}{T}+\frac{1}{Z}+\frac{1}{X}+\frac{1}{Y}
\end{array}$\\

$\begin{array}{@{\hspace{0mm}}r@{\;}l@{\hspace{0mm}}}
D_{41}&=91^2\,{\theta}^{4}+91 z \theta\left( -273-1210\,\theta-1874\,{\theta}^{2}+782\,{\theta}^{3} \right) \\
&+{z}^{2} \left( -2649920-9962953\,\theta-15227939\,{\theta}^{2}-11622522\,{\theta}^{3}-2515785\,{\theta}^{4} \right) \\
&+{z}^{3} \left( -110445426-348819198\,\theta-432607868\,{\theta}^{2}-258678126\,{\theta}^{3}-59827597\,{\theta}^{4} \right)\\
&+{z}^{4} \left( -1915723890-5439732380\,\theta-5901995820\,{\theta}^{2}-2998881218\,{\theta}^{3}-612043042\,{\theta}^{4} \right) \\
&+{z}^{5} \left( -18479595006-48522700563\,\theta-47503242813\,{\theta}^{2}-21226829058\,{\theta}^{3}-3762840342\,{\theta}^{4} \right) \\
&+{z}^{6} \left(-110147546634-271941545379\,\theta-244753624741\,{\theta}^{2}-98210309094\,{\theta}^{3}-15265487382\,{\theta}^{4}\right) \\
&+{z}^{7}(-422269162452-991829482602\,\theta-831965057114\,{\theta}^{2}-304487632282\,{\theta}^{3}-42103272002\,{\theta}^{4})\\
&-2{z}^{8}\left( \theta+1 \right)  \left( 39253400626\,{\theta}^{3}+275108963001\,{\theta}^{2}+654332416678\,\theta+521254338620 \right) \\
&-{z}^{9} \left( \theta+2 \right)  \left( \theta+1 \right)  \left( 94987355417\,{\theta}^{2}+545340710193\,\theta+799002779040 \right) \\
&-1540 {z}^{10}\left( 43765159\,\theta+149264765 \right)  \left( \theta+3 \right)  \left( \theta+2 \right)  \left( \theta+1 \right)\\
&-2^2 3 5^2 7^2 11^2 11971 z^{11} \left( \theta+1 \right)  \left( \theta+2 \right)  \left( \theta+3 \right)  \left( \theta+4 \right) 
\end{array}$\\

%% #38 %%%%%%%%%%%%%%%%%%%%%%%%%%%%%%%%%%%%%%%%%%%%%%%%%%%%%%%%%%%%%%%%%%%%%%%%%%
\noindent
\normalsize
\textbf{\#38}:

\noindent
\small

$\begin{array}{@{\hspace{0mm}}r@{\;}l@{\hspace{0mm}}}
f_{38}&={\frac {X}{Y}}+{\frac {ZT}{Y}}+Z+X+T+{\frac {X}{Z}}+Y+{\frac {Z}{X}}+{\frac {YZ}{X}}+{\frac {1}{XZT}}+{\frac {Y}{XZT}}+{\frac {Y}{XT}}+{\frac {1}{ZT}}+\frac{1}{Z}+{\frac {1}{YZ}}+{\frac {T}{Y}}\\
&+{\frac {X}{YZ}}+{\frac {Y}{ZT}}+{\frac {1}{XT}}+\frac{1}{X}+\frac{1}{Y}
+{\frac {Y}{X}}+\frac{1}{T}
\end{array}$\\

\noindent
$\begin{array}{@{\hspace{0mm}}r@{\;}l@{\hspace{0mm}}}
D_{38}&= 102^2\theta^4-102 z \left( 204+911\,\theta+1414\,{\theta}^{2}+116\,{\theta}^{3} \right)  \\
&+{z}^{2}(-2663424-9947652\,\theta-14508941\,{\theta}^{2}-9892670\,{\theta}^{3}-2596259\,{\theta}^{4})\\
&+{z}^{3}(-67967496-206933112\,\theta-239004708\,{\theta}^{2}-125234088\,{\theta}^{3}-25685301\,{\theta}^{4}) \\
&+{z}^{4}(-598491604-1608054100\,\theta-1587508748\,{\theta}^{2}-687051032\,{\theta}^{3}-112357900\,{\theta}^{4})\\
&+{z}^{5}(-2495389956-6085656898\,\theta-5273754198\,{\theta}^{2}-1927713868\,{\theta}^{3}-254678692\,{\theta}^{4})\\
&+{z}^{6}(-5385015134-11995897911\,\theta-9101625228\,{\theta}^{2}-2758627602\,{\theta}^{3}-283337071\,{\theta}^{4}) \\
&+{z}^{7}(-5612134720-11209872916\,\theta-7075746650\,{\theta}^{2}-1555791344\,{\theta}^{3}-86504770\,{\theta}^{4}) \\
&+12 {z}^{8} \left( \theta+1 \right)  \left( 7613560\,{\theta}^{3}+27844427\,{\theta}^{2}-51849552\,\theta-134696600 \right) \\
&+{z}^{9} \left( \theta+2 \right)  \left( \theta+1 \right)  \left( 60585089\,{\theta}^{2}+495871401\,\theta+595115780 \right)   \\
&-600 z^{10}\left( 10279\,\theta-113205 \right)  \left( \theta+3 \right)  \left( \theta+2 \right)  \left( \theta+1 \right)  \\
&-6790000 z^{11} \left( \theta+1 \right)  \left( \theta+2 \right)  \left( \theta+3 \right)  \left( \theta+4 \right)
\end{array}$\\
\normalsize


\begin{thebibliography}{99}
\bibitem[D]{D}{\sc E.J. Dyson}, {\em Statistical Theory of The Energy Levels of Complex Systems},
{Journal of Mathematical Physics, vol. 3, 1962}.
\bibitem[BK]{BK} {\sc Batyrev, Kreuzer}, {\em Constructing new Calabi-Yau 3-folds and their mirrors via conifold transitions }{\tt arXiv:0802.3376v2}
\bibitem[BvS]{BvS} {\sc V. Batyrev, D. van Straten} {\em Generalized Hypergeometric Functions and Rational Curves on Calabi-Yau Complete intersections in Toric Varieties, 1993}
\bibitem[K]{knuth} {\sc D. Knuth} {\em The Art of Computer Programming, volume 2: Seminumerical algorithms}, 3rd ed.,
Addison-Wesley, 1998.
\bibitem[T]{turner} {\sc L.R. Turner} {\em Inverse of the Vandermonde matrix with Appplications}, NASA Technical Note, 1969
\bibitem[DvK]{DvK}{\sc J. Duistermaat, W. van der Kallen,} {\em Constant terms in powers of a Laurent polynomial}, Indagationes Math. {\bf 9}, (1998), 221-231.
\bibitem[N]{cuda} {\em NVIDIA CUDA C Programming Guide}
\bibitem[SR]{Szabo} {\sc N. Szabo, R.Tanaka}, {\em Residue Arithmetic and its application to computer technology, 1967}{}
\end{thebibliography}
\end{document}